\newcommand{\eat}[1]{{}}
\def\Y{\mathcal{Y}}
\def\X{\mathcal{X}}
\def\bE{\mathbb E}
\def\R{\mathcal{R}}
\newtheorem{lemma}{Lemma}
\newcommand\orangesout{\bgroup\markoverwith{\textcolor{orange}{\rule[0.5ex]{2pt}{0.4pt}}}\ULon} 
\title{Real-Time Scheduling of O-RAN Base Stations} 
\title{Real-Time Scheduling of Softwarized Base Stations} 
\title{Bandit-Learning Scheduling of Virtualized Base Stations} 
\title{Bandit-Learning Scheduling of Virtualized Base Stations} 
\title{Real-Time Scheduling of Softwarized Base Stations with Bandit Online Learning} 
\title{Bandit Learning Scheduling for Base Stations} 
\title{Bandit Learning for Base Station Scheduling} 
\title{Learning to Schedule Virtualized Base Stations} 
\title{Energy-aware Scheduling of Virtualized Base Stations in O-RAN with Online Learning}
\author{
    \IEEEauthorblockN{Michail Kalntis, George Iosifidis}
    \IEEEauthorblockA{Delft University of Technology, The Netherlands
    	\\
    Email: \{m.kalntis, g.iosifidis\}@tudelft.nl
}
}
\IEEEoverridecommandlockouts\IEEEpubid{\makebox[\columnwidth]{ 978-1-6654-3540-6/22\$31.00~\copyright~2022 European Union \hfill} \hspace{\columnsep}\makebox[\columnwidth]{ }}
\begin{document}

\maketitle

\begin{abstract}
The design of Open Radio Access Network (O-RAN) compliant systems for configuring the virtualized Base Stations (vBSs) is of paramount importance for network operators. This task is challenging since optimizing the vBS scheduling procedure requires knowledge of parameters, which are erratic and demanding to obtain in advance. In this paper, we propose an online learning algorithm for balancing the performance and energy consumption of a vBS. This algorithm provides performance guarantees under unforeseeable conditions, such as non-stationary traffic and network state, and is oblivious to the vBS operation profile. We study the problem in its most general form and we prove that the proposed technique achieves sub-linear regret (i.e., zero average optimality gap) even in a fast-changing environment. By using real-world data and various trace-driven evaluations, our findings indicate savings of up to 74.3\% in the power consumption of a vBS in comparison with state-of-the-art benchmarks. 
\end{abstract}
\smallskip

\begin{IEEEkeywords}
O-RAN, Online Learning, Scheduling, Network Optimization, Green Mobile Networks, Virtualization
\end{IEEEkeywords}

\section{Introduction}\label{sec:intro}

\subsection{Background \& Motivation} The importance of virtualizing the base stations is best manifested by the current flurry of industrial activities aiming to develop and standardize O-RAN architectures \cite{o-ran-andres}. The O-RAN Alliance is a global initiative that is devoted to revolutionizing Radio Access Networks (RAN). Its goal is to decentralize a field that has hitherto been dominated by a small number of companies and decrease the entry barrier for more potential firms. In this sophisticated system, RANs are constructed on virtualized network modules, resulting in virtualized Radio Access Networks. The focal point of these components are the virtualized Base Stations, which can be henceforward hosted on various devices, such as commodity servers or tiny embedded devices, and offer the possibility of significant operational/capital expenditure (OPEX/CAPEX) reductions. Promising examples include the open-source srsLTE \cite{gomez2016srslte} and OpenAirInterface (OAI) \cite{oai}.

Indeed, there is a wide consensus that the vBS' programmability can bring crucial performance gains and add the much-needed versatility to the otherwise-rigid, RAN systems. Alas, these benefits come at a cost. These softwarized base stations are found to have less predictable performance and more volatile energy consumption \cite{rost-globecom15}, \cite{jose-icc21}, \cite{ayala2021bayesian}, an effect that is amplified when instantiating them in general-purpose computing infrastructure. Hence, it is imperative to understand how to operate, or \emph{schedule}, these vBSs in order to unblock their wide adoption without raising the energy costs of mobile networks to unsustainable levels.

First works aiming at this direction focus, and rightfully so, on learning vBS \emph{meta-policies} (details below). These rules are decided at non-real-time scale and then imposed on the real-time schedulers of each vBS. O-RAN proposals have provisions for such two-level scheduling \cite{oran-spec-arch}, \cite{oran-spec-scenarios}; and include recommendations for employing learning tools to increase, e.g., the long-term throughput. Nevertheless, to bridge the gap between theoretical proposals and practical results, it is necessary to learn effective policies without relying on strong (and often unrealistic) assumptions such as knowing the full vBS operation profile, the expected data traffic, or future channel conditions. Otherwise, it is likely to be trapped in highly-suboptimal vBS operation points; a finding that we quantify here in terms of excess (i.e., unnecessary) energy costs that can add up to $74\%$ (details in Sec. \ref{sec:evaluation}).

\emph{The goal of this work} is to take the next step in this crucial problem by proposing and evaluating a robust algorithm that identifies effective meta-policies (or, simply \emph{policies} hereafter) and is oblivious to information about the underlying vBS operation, their hosting platforms, network conditions and data traffic (or, user needs). The core idea is to model the vBS scheduling as a \emph{bandit learning} problem \cite{bubeck_bandits} and design an algorithm that has provably-optimal performance under extensive conditions. The optimality criterion we employ is a combined objective of effective throughput (i.e., modulated by the users' traffic) and energy consumption, where the latter can be prioritized via a tunable weight parameter. Moreover, unlike prior works, our algorithm is lightweight and has minimal overheads, hence can be easily implemented in practice.

\subsection{Related Work} 
The idea of optimizing resource management in softwarized networks is not new, and prior works have focused mainly on \emph{(i)} models that relate control knobs to performance functions, \emph{(ii)} model-free approaches that rely on training data, and \emph{(iii)} Reinforcement Learning (RL) techniques. Model-based examples include \cite{rost-globecom15} and \cite{bega2018cares}, which maximize the served traffic subject to vBS computing capacity. However, vBS operation is heavily affected by the hosting platform and network conditions \cite{jose-icc21}, which renders such models impractical. Model-free approaches employ, e.g., Neural Networks, to approximate the performance functions of interest \cite{patras-DNN-Tutorial2019}, and have been used for network slicing \cite{bega-NNSlicing-JSAC2020}, edge computing \cite{jiajia-ML-netw2020}, etc. Yet, their efficacy is conditioned on the availability of training data. Another prominent approach focuses on runtime observations and is known as Reinforcement Learning. It is used, for example, in interference management \cite{alcaraz2020online} and the deployment of Software-Defined Networking (SDN) controllers \cite{poular-RLSDN-ICC19}. RL solutions, however, suffer from the curse of dimensionality and do not offer performance guarantees.

Following an akin approach, contextual bandit algorithms have been employed to decide video streaming rates \cite{bastian-cba-infocom19} or BS handover thresholds \cite{chuai-collab-INFOCOM19}; assign Central Processing Unit (CPU) time to virtualized BSs  \cite{vrain_conf}; and control millimeter Wave (mmWave) networks \cite{tekin-TCCN2020}. These works require \emph{context}-related information (e.g., about network conditions and traffic), which shapes the performance functions, to be known before the system is configured. More recently, versatile Bayesian learning techniques have been proposed for configuring vBSs \cite{ayala2021bayesian} \textemdash which is closer to the current work. Despite their promise, these solutions require knowing the context, as well as all system perturbations to be stationary over time. Nonetheless, these assumptions are restrictive, especially for heterogeneous and small-cell networks.


To overcome these obstacles, we follow a fundamentally different path and design a vBS control scheme that builds upon the seminal \emph{Exp3} algorithm \cite{exp3_auer}. Unlike all prior works, our approach: \emph{(i)} offers robust performance guarantees; \emph{(ii)} handles any type of network and load variations (even adversarial); \emph{(iii)} is oblivious to the (time-varying and unknown) vBS performance functions; and \emph{(iv)} exhibits low implementation complexity in terms of memory and computation requirements. This latter feature is in stark contrast with RL techniques (sizeable memory space required to store all space-actions combinations) and Bayesian approaches  \cite{freitas-tutorial-Proc2016} (heavy-duty matrix inversions). The proposed policy belongs to the class of \emph{adversarial} bandit learning, cf. \cite{bubeck_bandits}, which has been successfully used in network routing \cite{awerbuch-routing} and power control in Internet of Things (IoT) networks \cite{mert-iot}.

\subsection{Contributions} We design a learning algorithm that decides thresholds for key vBS operation knobs, namely for the vBS transmission power, the eligible Modulation and Coding Scheme (MCS), and the duty cycle (or airtime). The policy is updated at a near-real-time scale and is subsequently fed to the real-time schedulers that fine-tune the vBS parameters, see Fig. \ref{fig:architecture}. This type of meta-learning, i.e., deciding policies instead of fixing the vBS values directly, is central in the O-RAN architecture and has been recently proposed and studied experimentally, e.g., see \cite{vrain_conf, ayala2021bayesian} and references therein. The proposed algorithm relies on bandit feedback and makes no assumptions about how these knobs affect the vBS performance, nor assumes knowledge of the users' traffic during each scheduling period. This renders it practical for different types of vBS, hosting platforms, and network/traffic conditions. The main contributions of this paper are summarized below:
\begin{itemize}[leftmargin=4mm]
	\item We study the vBS scheduling problem in its most general form, i.e., in non-stationary adversarial conditions and without knowledge of traffic and vBS operation functions.		 
	
	\item We design an algorithm that achieves sublinear regret w.r.t. the (unknown) best vBS configuration and has minimal computation and memory overhead. This is the first work applying \emph{adversarial} bandit learning to vBS control.
	
	\item We use real-world traffic traces and testbed measurements to demonstrate the weaknesses of prior works \cite{ayala2021bayesian}, as well as the efficacy of the proposed learning algorithm.
	
	\item We release the source code\footnote{\url{https://github.com/MikeKalnt/BSvBS}} of our implementation online, under a permissive free software license, along with detailed documentation.
\end{itemize}

\textbf{Paper Organization}. The rest of this paper is organized as follows. Sec. \ref{sec:model} discusses the model and formally states the problem. Sec. \ref{sec:algorithm} introduces the bandit learning algorithm and Sec.  \ref{sec:evaluation} presents its data/trace-driven evaluation. Sec. \ref{sec:conclusions} concludes our study. 

\vspace{0.1cm}
\section{System Model and Problem Statement} \label{sec:model}

\textbf{O-RAN Background}. Our model follows the O-RAN proposals \cite{oran-spec-arch, oran-spec-scenarios, o-ran-andres}, which have provisions for learning-based resource management. ``Opening'' the RAN is a significant initiative that aims to expand the vRAN ecosystem. It enables multiple vendors to design components of the network architecture, which was previously monopolized by a confined number of large industries that provided end-to-end solutions. We consider a virtualized Base Station comprising a Baseband Unit (BBU) hosted by an off-the-shelf platform and being attached to a Radio Unit (RU). BBU corresponds to a Long-Term Evolution (LTE) eNodeB (eNB) for a 4G network and to a New Radio (NR) gNodeB (gNB) for a 5G network. For the latter, gNB is disaggregated into three focal components: (\emph{i}) the RU, (\emph{ii}) the DU, and (\emph{iii}) the CU.\footnote{In the O-RAN ecosystem, the terms O-CU, O-DU, O-RU, and O-eNB are used to denote the CU, DU, RU, and eNB, respectively.} The architecture of the implemented system can be seen in Fig. \ref{fig:architecture}. 

\emph{Our goal is to design performance/energy-optimizing configuration policies that adapt to network conditions and user needs}. We assume that an rApp (Policy Decider - PD), see Fig. \ref{fig:architecture}, is instantiated at the Non-RT RIC and implements an algorithm that learns to select efficient \emph{radio policies}. These are essentially \emph{adaptive threshold rules} which guide the underlying real-time schedulers towards the desirable vBS operation. The policy is communicated via the R1 interface to the Non-RT framework, and from there, it is provided to the Near-RT RIC via the A1 interface. In the Near-RT RIC, an xApp (Policy Enforcer - PE) forwards the radio policy to the E2 nodes\footnote{E2 nodes refer to RAN nodes, such as an O-CU, an O-DU, or an O-eNB.} through the E2 interface. The optimal policy depends on the network conditions and users' load, both of which may vary arbitrarily across time and are typically unknown when the policy is decided. At the end of each decision period $t$, the Near-RT RIC's Data Monitor computes a \emph{reward} by aggregating the adopted performance and energy cost metrics received via the E2 and feeds them to the PD through the O1 interface, before the next decision round.

\begin{figure}[t]
	\centering
	\includegraphics[scale=0.35]{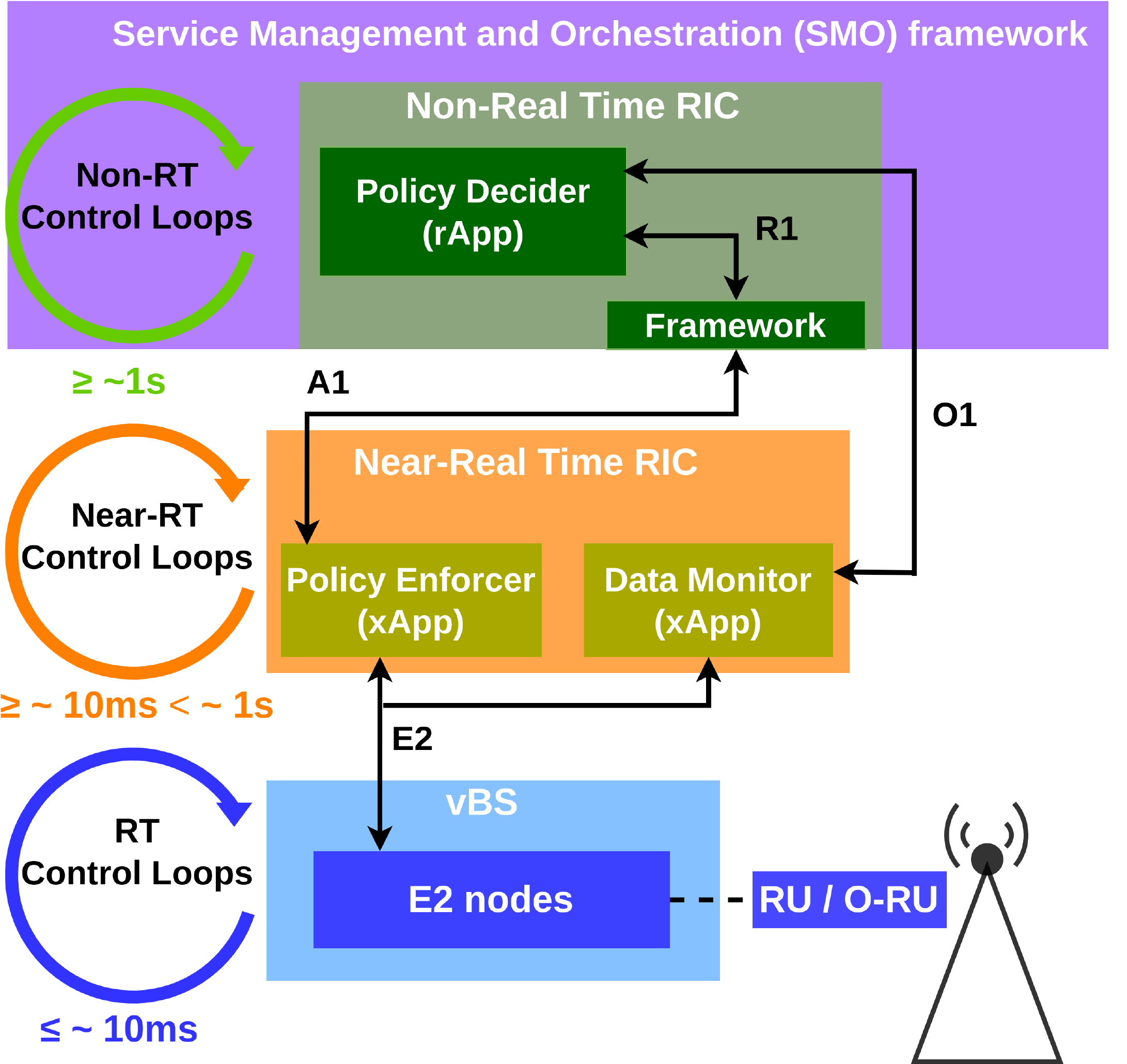}   
	\caption{\textbf{O-RAN-compliant  architecture \& workflow}. The key building block is the Non-Real-Time (Non-RT) RAN Intelligent Controller (RIC), hosted by the Service Management and Orchestration (SMO) framework; and the Near-Real-Time (Near-RT) RAN Intelligent Controller (RIC) \cite{oran-spec-scenarios}. The system has three control loops: (\emph{i}) Non-RT in the Non-RT RIC, which involves large-timescale operations with execution time $\!>\!1$sec, (\emph{ii}) Near-RT in the Near-RT RIC ($>\!10$msec), and (\emph{iii}) RT (Real-Time) control loops in the E2 nodes ($<\!10$msec).}
	\label{fig:architecture}
\end{figure}

\textbf{vBS Controls}. We consider time-slotted system operation in alignment with O-RAN specs, where each slot represents a period (range of a few seconds) over which a certain policy is being applied. We optimize the system operation over a time horizon of $t=1,\ldots, T$ slots, where $T$ can take arbitrarily large values and is decided in advance.\footnote{The assumption of fixing $T$ can be dropped by employing the doubling trick or time-adaptive learning parameters; details in next section.} Without loss of generality (WLOG), we assume unitary slot length.

Our policy includes thresholds for specific scheduling controls that are key to vBS performance, in line with recent measurement-based studies \cite{ayala2021bayesian}, \cite{jose-icc21}, \cite{vrain_conf}. In detail, for the downlink (DL) operation, we define the set of maximum allowed vBS transmission power control (TPC) $\mathcal{P}_{d} = \{p_{i}^{d}, \,\,\forall i \in [H] \}$, the set of highest eligible MCS $\mathcal{M}_{d} = \{m_{i}^{d}, \,\,\forall i \in [I] \}$ and the set of maximum vBS transmission airtime, or duty cycle $\mathcal{A}_{d} = \{a_{i}^{d}, \,\,\forall i \in [J] \}$, where $H$, $I$, and $J$ denote the number of transmission power, MCS, and airtime levels in DL, respectively. Hence, in period $t$, we determine the DL control:

\[x_t^{d} \in \mathcal{P}_{d} \times \mathcal{M}_{d} \times \mathcal{A}_{d}.
\]

For the uplink (UL) operation, we introduce the set $\mathcal{M}_{u} = \{m_{i}^{u}, \,\,\forall i \in [K] \}$ and $\mathcal{A}_{u} = \{a_{i}^{u}, \,\,\forall i \in [L] \}$, where $K$ and $L$ express the number of MCS and airtime levels in UL, respectively.\footnote{A UL TPC policy is not defined since the users' transmission power has less impact on the vBS power than the MCS and UL airtime.} Thus, the UL control in period $t$ is: 

\[x_t^{u} \in \mathcal{M}_{u}\! \times\! \mathcal{A}_{u}.\] 

Following 3GPP specs, we assume these controls take values from a finite set that includes all possible combinations:\footnote{For instance, the MCS values are predetermined, and similarly, one can quantize the power and airtime values; see, e.g. \cite{gomez2016srslte} for the \emph{srsLTE} vBS.}
\[
\mathcal{X} = \mathcal{P}_{d} \!\times\! \mathcal{M}_{d}\! \times\! \mathcal{A}_{d}\! \times\! \mathcal{M}_{u}\! \times\! \mathcal{A}_{u}.
\]
{Thus, the \emph{radio policy} in period $t$ is specified as:
\nolinebreak
\[x_t = (x_t^{d}, x_t^{u}) \in \X.\]}

\textbf{Rewards \& Costs}. The first goal of the learner is to maximize the \emph{effective} DL and UL throughput, which depends on the aggregate of the transmitted data and the backlog in each direction. In particular, in line with prior works (see \cite{ayala2021bayesian} and references therein), we use the following {\emph{utility function}}:
\begin{equation} \label{eq:utility}
	U_t(x_t)=
	\log \left(1\!+\!\frac{R_t^{d}(x_t^{d})}{d_t^{d}}\right)\!+\log \left(1\!+\!\frac{R_t^{u}( x_t^{u})}{d_t^{u}}\right),
\end{equation}
when $d_t^{d} \!> \!0$ and $ d_t^{u} \!>\! 0$, with $U_t(x_t) \!=\! 0$ otherwise.
$R_t^{d}(\cdot)$ and $R_t^{u}(\cdot)$ denote the DL and UL transmitted data during period $t$; and $d_t^d$ and $d_t^u$ are the respective backlogs (i.e., the user needs/requests during $t$). The logarithmic transformation balances the system utility across each stream, but we note that other mappings (e.g., linear) are eligible.

The second goal of the policy is to minimize the vBS energy cost. To that end, we introduce the \emph{power cost} function $P_t(x_t)$, which depends on the configuration $x_t$ in an unknown and possibly time-varying fashion. We kindly refer the reader to our experimental study \cite{jose-icc21} regarding the challenges in modeling the vBS power cost. We focus principally on \emph{(i)} the power consumption of the CPU at a vBS, which has the lion's share of the total power consumed at the BBU \cite{jose-icc21}; and \emph{(ii)} the total vBS power consumption (including the RU). We consider these distinct cases to capture the scenarios (arising in practice) where the DU (hosting the BBU) and RU have, or do not have, a common power source. Therefore, we model $P_t(\cdot)$ as a black-box with values observed in runtime.

Putting the above together, the performance criterion for the {PD} is the \emph{reward function} $ \tilde f_t\!: \mathbb \X \rightarrow \mathbb R$ defined as:
\begin{equation}\label{eq:reward}
\tilde f_t(x_t)={U_t(x_t)}-\delta {P_t(x_t)},
\end{equation}
where the parameter $\delta\!>\!0$ is set by the {PD} to tune the relative priority of the {utilities} and energy costs. It also serves as a metric transformation, allowing a meaningful scalarization of the function components. Further to that, we introduce, for technical reasons, the \emph{scaled} reward function $f_t:\X \rightarrow [0,1]$ since our learning algorithm operates on that interval. An easy-to-implement mapping that ensures this normalization is:
\begin{equation}
f_t(x_t)=\big( \tilde f_t(x_t)- \tilde f_{min} \big)/ \big(\tilde f_{max}-\tilde f_{min}\big).  \label{eq:mapping}
\end{equation}
The scaling parameters $\tilde f_{min}$ and $\tilde f_{max}$ can be directly determined beforehand based on the value of $\delta$ (depending on the importance given to each component of the function), the minimum/maximum value of the power cost function (might be the minimum and maximum value of the monetary cost associated with the vBS operation), the minimum/maximum transmission power of the vBS, airtime, MCS and user loads. The bounded reward assumption comes WLOG.


\textbf{Environment \& System Volatility}. It is crucial to note that both reward components vary with time. There are several factors contributing to this effect. First, the user traffic that shapes $U_t$ changes, sometimes drastically \textemdash e.g., in small cell networks where user churn is high. Second, the network conditions might as well vary (in slow, fast, or mixed timescales), and this affects the achieved data transmissions (hence $U_t$ changes even for fixed $x_t$) but also impacts the energy cost $P_t$ (low Signal-to-Noise Ratio (SNR) induces more BBU processing costs \cite{jose-icc21}). Third, the operation cost of the vBS hosting platform is subject to the variations of external computing loads (e.g., when co-hosting other services or other vBS/DUs), changes in the monetary cost (or availability) of the energy price, and so on. Importantly, all the above factors are \emph{unknown} at the beginning of each scheduling period $t$. Indeed, it is challenging to predict the user loads, energy availability, channel conditions, etc., over a few seconds. This, in turn, means that often in practice, when we decide $x_t$ in each slot, we do not have access to the function $f_t$.

\textbf{Learning Objective}. The goal of the {PD} is to find a sequence of configurations $\{x_t\}_{t=1}^T$ that aggregate {rewards} so as to approach, asymptotically, the {cumulative reward} achieved by the single best ({ideal}) configuration. Formally, we employ the metric of \emph{static expected regret}:
\begin{equation}
\R_T= \max_{x\in \X	} \left\{ \sum_{t=1}^T f_t(x)\right\} - \bE \left[ \sum_{t=1}^T f_t(x_t)	\right], \label{def:regret}
\end{equation}
\noindent where the first term describes the best configuration that can be only selected with hindsight, i.e., with a priori knowledge of all future {reward} functions until $T$; and the second term measures the achieved {cumulative reward} by our policy. Note that the expectation is induced by any possible randomization in the selection of $\{x_t\}$ that is introduced by the {PD}.

Eventually, our objective is to devise a rule that decides the configurations in such a way that the average regret,\footnote{Due to the exploration phase of the algorithm proposed, it might happen that $\R_{t+1} > \R_t$ because the algorithm in slot $t+1$ explored a configuration that was under-performing compared to the configuration chosen in slot $t$. However, the average regret could still diminish to zero as time evolves. More information is available in Sec. \ref{sec:algorithm}.} for any possible realization of {rewards} $\{f_t\}_{t=1}^T$, diminishes asymptotically to zero, i.e., $\lim_{T\rightarrow \infty}\R_T/T=0$. Furthermore, we wish to ensure this condition without knowing the sequence of {rewards}, not even having access to $f_t$ at the time $x_t$ is being decided. This makes our policy applicable to a range of practical scenarios, such as in highly volatile wireless environments, locations with high user churn, or small-cells where user demands are non-stationary  \cite{paschos-comm-mag}.

\vspace{0.1cm}
\section{Bandit Learning Algorithm} \label{sec:algorithm}

Our PD builds on the seminal \emph{Exp3} algorithm \cite{exp3_auer} that imposes no assumptions on the sequence of {rewards} $\{f_t\}_{t=1}^T$.  This, in turn, means that the obtained performance bounds are guaranteed to hold independently of how the network and environmental factors vary and/or affect the rewards and costs. Hence, unlike prior works such as \cite{ayala2021bayesian} and \cite{vrain_conf}, the achieved performance is robust and valid for non-stationary conditions. Besides, as we demonstrate in Sec. \ref{sec:evaluation}, the actual performance is often substantially higher than these worst-case bounds. Another prominent feature of our  algorithm is that it works with \emph{bandit} feedback, i.e., it is adequate to observe the outcome $f_t(x_t)$  of the employed configuration $x_t$ \textemdash instead of the entire $f_t(\cdot)$, which remains unknown.\footnote{Clearly, when applying $x_t$ we observe $f_t(x_t)$ but do not learn how the system would have performed for any other configuration $x\!\in\! \X$, with $x\!\neq\!x_t$.}

In detail, the underlying idea is that we learn, on the fly, the correct probability distribution $y_t$ (the sequence of distributions) from which we can draw the configuration $x_t$ for each period $t$:
\[
x_t \sim \mathbb{P}(x_t=x') = y_t(x'), \forall x' \in \mathcal{X}.
\] 
The distributions $\{y_t\}_t$ belong to the probability simplex:
\[
 \Y=\left\{ y\in [0,1]^{|\mathcal{X}|} \,\, \bigg \vert \,\,\sum_{x\in \mathcal{X}} y(x)=1	\right\}, \notag
\]
and are calculated in each round using the following carefully-crafted explore/exploit rule:
\begin{equation}
		y_t(x) = \frac{\gamma}{|\mathcal{X}|} + (1-\gamma)\frac{w_t(x)}{\sum_{x}{w_t(x)}},\,\,\, \forall \, x\in \mathcal{X}. \label{eq:update-distribution}
\end{equation}
Parameter $\gamma\! \in\! [0,1]$ determines the extent to which the PD samples a configuration randomly (exploration), or prioritizes those configurations found to perform well (exploitation). The latter  happens with the help of the weight vector $w_t\!=\!\big(w_t(x)\!:\! x\!=\!1,\ldots, |\mathcal{X}|\big)$ that tracks the success of each tested policy. In particular, we update the weights at the end of each period, using the exponential update:
\begin{equation}
	w_{t+1}(x)=w_t(x)\exp \left(\frac{\gamma \Phi_t(x)}{|\mathcal{X}|} \right),\,\,\,\forall\,x\in\mathcal{X}, \label{eq:weights}
\end{equation}
where $\Phi_t(x)$ is an unbiased function estimator defined as:
\begin{equation}
\Phi_t(x)=
\begin{cases}
	f_t(x_{t}) / y_t(x_t), & \text{if} \,\, x = x_t,\\
	0, & \text{otherwise.}
\end{cases}
\label{eq:weighted-feedback}
\end{equation}
 {We recall that $x_t$ is the selected configuration in slot $t$. If the estimator $\Phi_t(x)$ is used to estimate the actual reward $f_t(x_t)$, it is straightforward to see that:
\[
    \bE[\Phi_t(x) | x_1,x_2, \ldots, x_{t-1}]=f_t(x_t), \notag
\]
where $x_1,\ldots, x_{t-1}$ are the configurations chosen up to ${t-1}$. By weighting each observed value with its selection probability, we ensure that the PD will eventually explore configurations with a small probability.}

The steps of the proposed learning scheme are summarized in Algorithm \ref{alg1}, which takes as input the time horizon $T$ over which we optimize the vBS operation and the number of eligible configurations $|\mathcal{X}|$; this information suffices to optimize the value of the exploration parameter $\gamma$. The performance of the algorithm is summarized in the following lemma, which holds for any possible sequence of functions $\{f_t\}_{t=1}^T$:
\begin{lemma}\label{corrolary1}
Algorithm 1 for a fixed horizon $T$ ensures expected regret defined in \eqref{def:regret}:
\begin{equation}
	\R_T \leq 2\sqrt{(e - 1)}\sqrt{T|\mathcal{X}|\ln{|\mathcal{X}|}} \label{eq:exp3_bound}
\end{equation}
\end{lemma}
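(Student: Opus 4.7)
The plan is to follow the classical Exp3 analysis of Auer, Cesa-Bianchi, Freund, and Schapire, adapted to the reward setting used here. The statement is exactly the Exp3 bound for $K=|\mathcal{X}|$ actions, so the proof strategy is standard and the main work is to track the evolution of the normalization constant of the weight vector.

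First, I would introduce the potential $W_t = \sum_{x\in\mathcal{X}} w_t(x)$ and, using the update rule \eqref{eq:weights}, bound the ratio $W_{t+1}/W_t$ from above. For each $x$, Taylor-expanding the exponential and using the standard inequality $e^{z}\le 1+z+(e-2)z^{2}$ valid for $z\in[0,1]$ (which applies because $\gamma\Phi_t(x)/|\mathcal{X}|\le 1$ by \eqref{eq:weighted-feedback} and the mixing in \eqref{eq:update-distribution}), one obtains
\begin{equation*}
\frac{W_{t+1}}{W_t}\le 1+\frac{\gamma/|\mathcal{X}|}{1-\gamma}\sum_{x}\bigl(y_t(x)-\tfrac{\gamma}{|\mathcal{X}|}\bigr)\Phi_t(x)+\frac{(e-2)(\gamma/|\mathcal{X}|)^{2}}{1-\gamma}\sum_{x}\bigl(y_t(x)-\tfrac{\gamma}{|\mathcal{X}|}\bigr)\Phi_t(x)^{2}.
\end{equation*}
Taking logarithms, using $\ln(1+u)\le u$, and telescoping across $t=1,\dots,T$ yields an upper bound on $\ln(W_{T+1}/W_1)$ in terms of $\sum_t \Phi_t(x_t)$ and $\sum_t \Phi_t(x_t)^{2}$.

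Second, I would lower-bound the same quantity by restricting attention to the best fixed configuration $x^{\star}\in\argmax_{x}\sum_{t}f_t(x)$: since $w_{T+1}(x^{\star})\le W_{T+1}$ and $w_1(x)=1$ for all $x$, one has $\ln(W_{T+1}/W_1)\ge \tfrac{\gamma}{|\mathcal{X}|}\sum_{t}\Phi_t(x^{\star})-\ln|\mathcal{X}|$. Combining the two bounds and rearranging produces an inequality of the form
\begin{equation*}
\sum_{t}\Phi_t(x^{\star})-\sum_{t}\sum_{x}y_t(x)\Phi_t(x)\le \frac{|\mathcal{X}|\ln|\mathcal{X}|}{\gamma}+(e-2)\gamma\sum_{t}\sum_{x}\Phi_t(x)^{2}\cdot\tfrac{y_t(x)}{|\mathcal{X}|}+\text{lower-order terms}.
\end{equation*}

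Third, I would take expectations and invoke the key unbiasedness property $\mathbb{E}[\Phi_t(x)\mid x_1,\dots,x_{t-1}]=f_t(x)$ already noted in the excerpt, together with $\mathbb{E}\!\left[\sum_x y_t(x)\Phi_t(x)^{2}\right]=\sum_x f_t(x)^{2}\le |\mathcal{X}|$, using $f_t(x)\in[0,1]$. The left-hand side becomes exactly the expected regret $\R_T$ (up to the $\gamma$-exploration slack bounded by $\gamma T$), while the right-hand side reduces to $|\mathcal{X}|\ln|\mathcal{X}|/\gamma+(e-2)\gamma T$.

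Finally, I would optimize over $\gamma$ by choosing $\gamma=\min\!\bigl\{1,\sqrt{|\mathcal{X}|\ln|\mathcal{X}|/((e-1)T)}\bigr\}$, which (after balancing the two terms) yields the stated bound $\R_T\le 2\sqrt{e-1}\,\sqrt{T|\mathcal{X}|\ln|\mathcal{X}|}$. The main obstacle is not conceptual but bookkeeping: carefully handling the $(1-\gamma)$ factor introduced by the exploration mixture in \eqref{eq:update-distribution} and ensuring that the second-order term in the exponential expansion is controlled by a variance quantity whose conditional expectation telescopes to something independent of the algorithm's randomness; this is where the importance-weighting factor $1/y_t(x_t)$ cancels and delivers the $|\mathcal{X}|$ in the second-order term.
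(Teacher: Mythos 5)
Your proof is correct and is in substance the same argument the paper relies on: the paper's proof simply cites the main Exp3 result of Auer et al., instantiating the number of arms as $|\mathcal{X}|$ and the cumulative-reward upper bound as $g=T$ (valid because the scaled rewards lie in $[0,1]$), so that the tuned $\gamma$ in Algorithm 1 yields $\R_T\le 2\sqrt{(e-1)}\sqrt{T|\mathcal{X}|\ln|\mathcal{X}|}$. What you have done is re-derive that cited result in full---the potential function, the inequality $e^{z}\le 1+z+(e-2)z^{2}$, the best-arm lower bound, the unbiasedness and importance-weighting cancellation giving the $(e-2)\gamma T$ variance term, and the balancing choice of $\gamma$---and every step checks out, so your write-up is a self-contained version of the citation-based proof in the paper.
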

\begin{proof}
The proof follows by tailoring the main result of \cite{exp3_auer}. We provide a brief but sufficient explanation. In particular, for selecting $\gamma$, we need to determine an upper bound $g$ on the cumulative reward of the best configuration until $T$. Given that: \emph{(i)} the horizon $T$ is known in advance; and \emph{(ii)} the rewards $f_t(x_t)$ for each chosen configuration $x_t$ at time $t$ cannot be greater than 1 (due to the proposed normalization described in Sec. \ref{sec:model}), the value of $g$ can be set equal to $T$, i.e., $g \!=\! T$. Also, the number of \emph{bandit arms} in our case corresponds to the eligible configurations; hence it is equal to $|\X|$.
\end{proof}

\setlength{\textfloatsep}{0pt}
\begin{algorithm}[t] 
	\nl \textbf{Input}: Horizon $T$; Configurations $|\mathcal{X}|$;\\ 
	\nl \textbf{Initialize}: $\gamma = \min\left\{1, \sqrt{\dfrac{|\mathcal{X}| \ln{|\mathcal{X}|}}{(e-1)T}}\right\}$; \\[2mm] \hspace{1.35cm}{$w_{1}(x)\leftarrow 1, \,\,\forall x\in \mathcal X$.}\\[1mm]
	\nl \For{ $t=1,2,\ldots, T$  }{
		\nl Update the distribution using \eqref{eq:update-distribution}.\\
		\nl Sample next configuration: $x_t \sim y_t$.\\
		\nl Receive \& scale reward $f_t(x_t)$.\\
		\nl Calculate weighted feedback using \eqref{eq:weighted-feedback}.\\
		\nl Update the weights using \eqref{eq:weights}. 
	}
	\caption{{Bandit Scheduling for vBS (BSvBS)}}\label{alg1}
	\setlength{\intextsep}{0pt} 
\end{algorithm}

We notice that the expected regret is indeed sublinear $\R_T=o(T)$, which ensures that its time average diminishes to zero. Hence, Algorithm \ref{alg1} is guaranteed to achieve the same performance as the (unknown) {single best configuration}, without imposing any conditions on the system operation, network conditions, or user demands. This robust behavior fills the gap of recent related works \cite{vrain_conf}, \cite{ayala2021bayesian}. Moreover, we highlight that the regret depends on the number of possible meta-policies up to a square root factor. And while their number is expected to be smaller than the number of policies applied to the RT O-RAN level, this finding still points to an interesting direction for further reducing this dependency. 

\vspace{0.1cm}
\section{Performance Evaluation} \label{sec:evaluation}

\textbf{Experimental Setup \& Scenarios}. We evaluate Algorithm 1 in a variety of scenarios using our recent dataset \cite{ayala2021bayesian}, which includes measurements of the power consumption and performance of vBS policies. The experiments have been conducted using a  \emph{srsRadio} vBS \cite{gomez2016srslte}, and we have used its default schedulers for the underlying real-time decisions (which comply with our meta-policies).\footnote{The dataset contains \SI{32797}{} measurements for different policies, fixed for approximately one minute. The experiments are carried out on a small factor general-purpose PC (Intel NUC BOXNUC8I7BEH with CPU i7-8559U@\SI{2.70}{\giga\hertz}), which deploys the BBU and is configured with a bandwidth of \SI{10}{\mega\hertz}. This means that it supplies a maximum capacity of approximately \SI{32}{Mbps} and \SI{23}{Mbps} for the downlink and uplink operation, respectively. See \cite{ayala2021bayesian} for details.} 

The random perturbations in this setup emanate due to time-varying UL and DL demands, $\{d_t^{u},d_t^{d}\}_t$, and time-varying CQIs (Channel Quality Indicators), $\{c_t^{u}, c_t^{d}\}_t$. The latter are integer numbers sent from the User Equipment (UE) to the network to designate how good or bad the channel quality is. The dataset contains $|\mathcal{X}| \!=\! 1080$ configurations (policies), but we use a subset of them because calculating the best configuration in hindsight is computationally challenging when $|\mathcal{X}|$ is large.\footnote{We stress that this benchmark configuration is needed for the plots of $R_T$, but \textbf{it is not required} when one uses the algorithm in practice. Hence, this limitation is related only to presenting the regret here.} For the power cost function, we set $P_t(x_t) \!=\! V_t$, where $V_t$ is the total power consumed by the vBS. We also set $\delta=1$ to prioritize the minimization of the power consumption.


\begin{figure}[t]
	\centering
	\subfigure[]{\label{fig:nonstat-BP-vRAN}\includegraphics[scale=0.35]{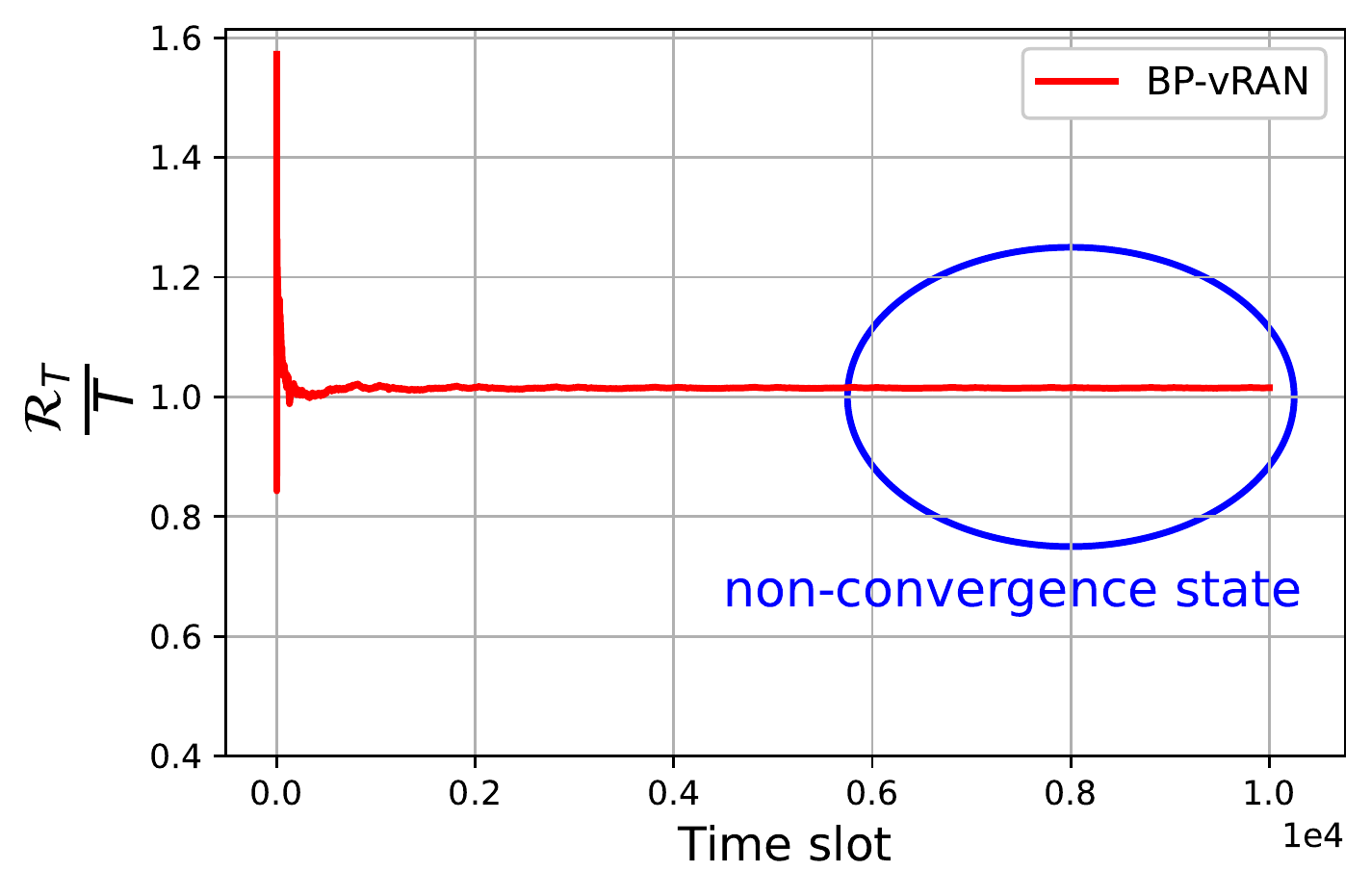}} 
	\subfigure[]{\label{fig:chosen_configurations}\includegraphics[scale=0.27]{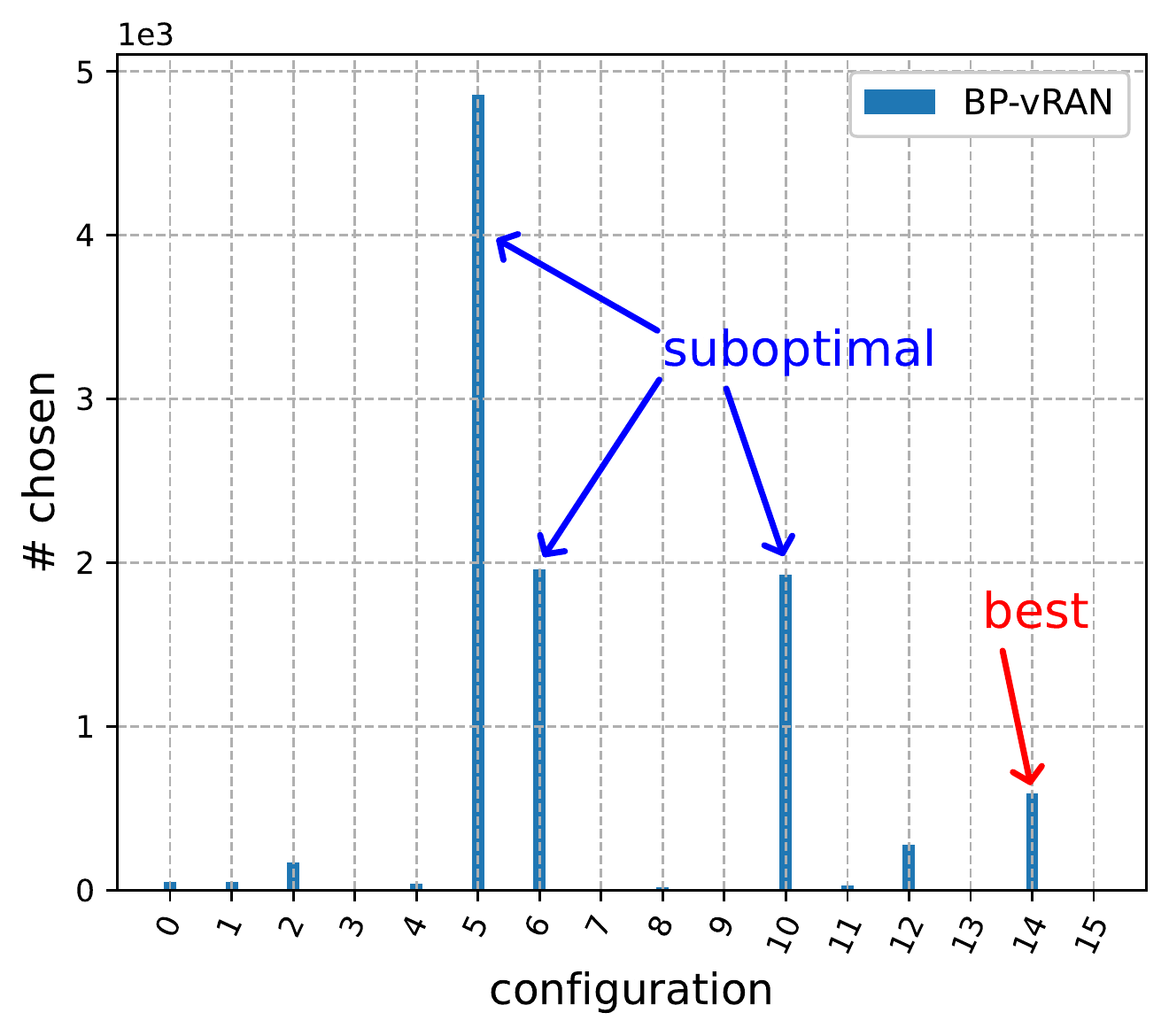}}
	\caption{BP-vRAN regret in Scenario B for $|\mathcal{X}|\!=\!16$. \textbf{(a)}: Evolution of $R_T/T$ in non-stationary conditions. \textbf{(b)}: Numbers of times each eligible policy was chosen in a window of $T=10k$ slots. }
	\label{fig:BP-vRAN_T-10000}
    \medskip
\end{figure}

For the following analysis, we assume that the traffic loads discerned in our system are sampled, either from $d_t^{d} \sim \mathcal{U}(29, 32)$, $d_t^{u} \sim \mathcal{U}(20, 23)$ (high DL and UL demands, respectively), or from $d_t^{d},\, d_t^{u} \sim \mathcal{U}(0.01, 1)$ (low DL and UL demands, respectively). Similarly, the channel qualities are drawn, either from $c_t^{d},\, c_t^{u} \sim \mathcal{U}(13, 15)$ (good channel qualities\footnote{CQI 13 and 15 correspond to SNR of \SI{25}{\decibel} and \SI{29}{\decibel}, respectively.} in DL and UL, respectively), or from $c_t^{d}, c_t^{u} \sim \mathcal{U}(1, 3)$ (poor channel qualities\footnote{CQI 1 and 3 correspond to SNR of \SI{1.95}{\decibel} and \SI{6}{\decibel}, respectively.} in DL and UL, respectively). According to these distributions, we discern two scenarios:
\begin{itemize}[leftmargin=4mm]
    \item \underline{Scenario A}: the demands and CQIs are consistently drawn uniformly at random from the \emph{high} distribution (\emph{stationary}).
    
    \item \underline{Scenario B}: the demands and CQIs are drawn randomly from the \emph{high} distribution in slots $\{2t-1\}_{t=1}^{\lceil T/2 \rceil}$ and from the \emph{low} distribution in slots $\{2t\}_{t=1}^{\lfloor T/2 \rfloor}$. 
\end{itemize}

The first scenario aligns with the experiments in recent studies, e.g., \cite{ayala2021bayesian}, \cite{jose-icc21}, \cite{vrain_conf}. The second one implements a \emph{ping-pong} strategy that corresponds to the most challenging-to-learn \textbf{adversarial} scenario in regret analysis, cf. \cite{bubeck_bandits}. Clearly, an algorithm that performs well under this case is expected (in fact, guaranteed) to perform well in all other scenarios.

\textbf{Gap in Prior work}. We first show that state-of-the-art works under-perform in commonly-encountered non-stationary conditions. We focus on BP-vRAN \cite{ayala2021bayesian}, which relies on the seminal GP-UCB algorithm, cf. \cite{freitas-tutorial-Proc2016}. To demonstrate that even a simple case hampers its operation, we focus on a subset of configurations: $|\mathcal{M}_{d}|\!=\! |\mathcal{M}_{u}|\! =\! |\mathcal{A}_{u}| \!=\! |\mathcal{A}_{d}| \!=\! 2$ and $|\mathcal{P}_{d}| \!=\! 1$, thus $|\mathcal{X}| \!=\! 16$. 

BP-vRAN models the user demands and CQIs as \emph{context}, which are observed before a policy is decided. Given that the context directly impacts the selection of the controls, we will show how changes in the network conditions and demand will affect the algorithm's success. We present an example where the context differs between its observation and application to the system. This case might apply quite often in practice, given that the slots of reference are of several seconds. For the plots in this section, it is crucial to note that the reward function $f_t(x_t)$ is unbounded.\footnote{For that reason, we avoid showing the expected regret of BP-vRAN and BSvBS in the same plots.}

We perform the experiments in Scenario B. Thus, even though we detect low demands and CQIs in the first slot, when the context is applied to the system, these values have increased significantly because, for example, many users with minimum noise interference arrived. In the next time slot, we observe high demands and CQIs, but the actual context has decreased when applied to the system because, for example, few users are present and SNR is reduced. According to the distributions mentioned above, the pattern remains the same in the following slots, i.e., context altering between high and low values in the observation and application to the system.

In Fig. \ref{fig:nonstat-BP-vRAN}, we underline that the expected average regret does not decrease, even after $T=10k$ slots. This happens because the algorithm takes decisions in each slot $t$ by assuming knowledge of $f_t$, which might take arbitrarily low or high values, depending on the network conditions. However, due to the system's volatility, the policy for each $t$ should be selected solely based on past values $\{f_\tau(x_\tau)\}_{\tau=1}^{t-1}$. Fig. \ref{fig:chosen_configurations} demonstrates that BP-vRAN insists on selecting configuration $\#5$ (approximately for $\SI{50}{\percent}$ of the slots) and picks the optimal configuration $\#14$ only for $590$ out of the $10k$ slots. This manifests its inability to explore the configuration space due to the non-stationary demands and CQIs. 

\textbf{Evaluation of Algorithm \ref{alg1}}. We consider both Scenarios A and B in our experiments, and we set $|\mathcal{X}|\!=\!256$ for the reasons mentioned above. 
%
%
Fig. \ref{fig:BSvBS} depicts the expected regret for \SI{20}{} independent runs in a window of $T\!=\!50k$ slots. More precisely, Fig. \ref{fig:BSvBS_1} plots the decay of $R_T$ for BSvBS in the stationary and adversarial cases. During the first $12k$ slots, the incurred regret for Scenario A is higher than the case of Scenario B, i.e., we perceive a \SI{7.5}{\percent} difference in slot $1k$. As time evolves and confidence in the performance of configurations is built, it is reasonable to observe the regret of the stationary case (Scenario A) to be lower than the more volatile system of Scenario B. That is, in slot $50k$, the regret for the adversarial case is \SI{12.7}{\percent} greater than the stationary's Scenario in slot $50k$. Furthermore, we see in Fig. \ref{fig:BSvBS_2} that the experienced regret is by far lower than the upper's bound; that is \SI{80.9}{\percent} and \SI{78.1}{\percent} lower for Scenario A and B, respectively.


\begin{figure}[t]
    \centering
    \subfigure[Evolution over time.]{\label{fig:BSvBS_1}\includegraphics[scale=0.3]{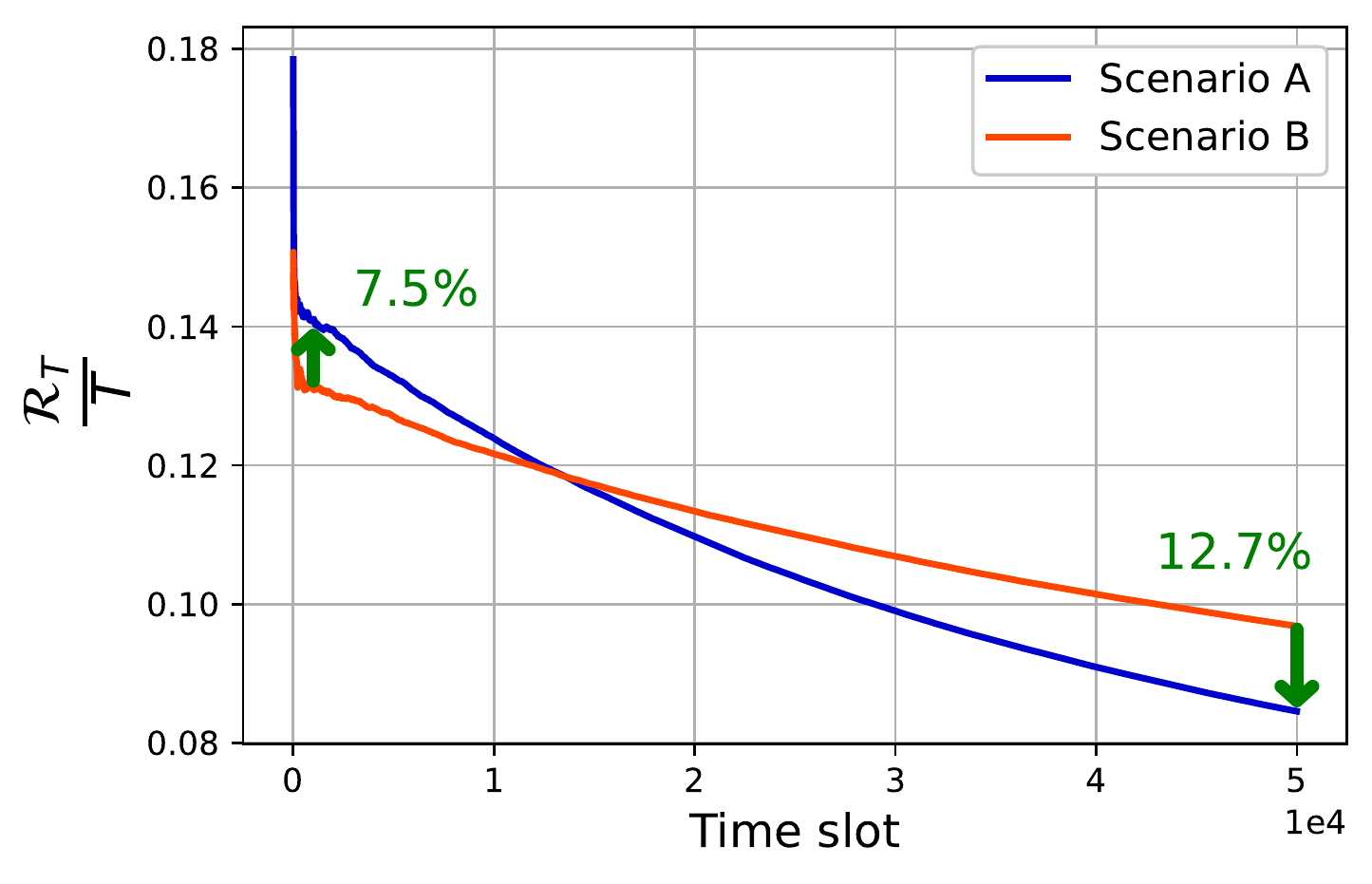}} 
    \subfigure[Upper bound.]{\label{fig:BSvBS_2}\includegraphics[scale=0.3]{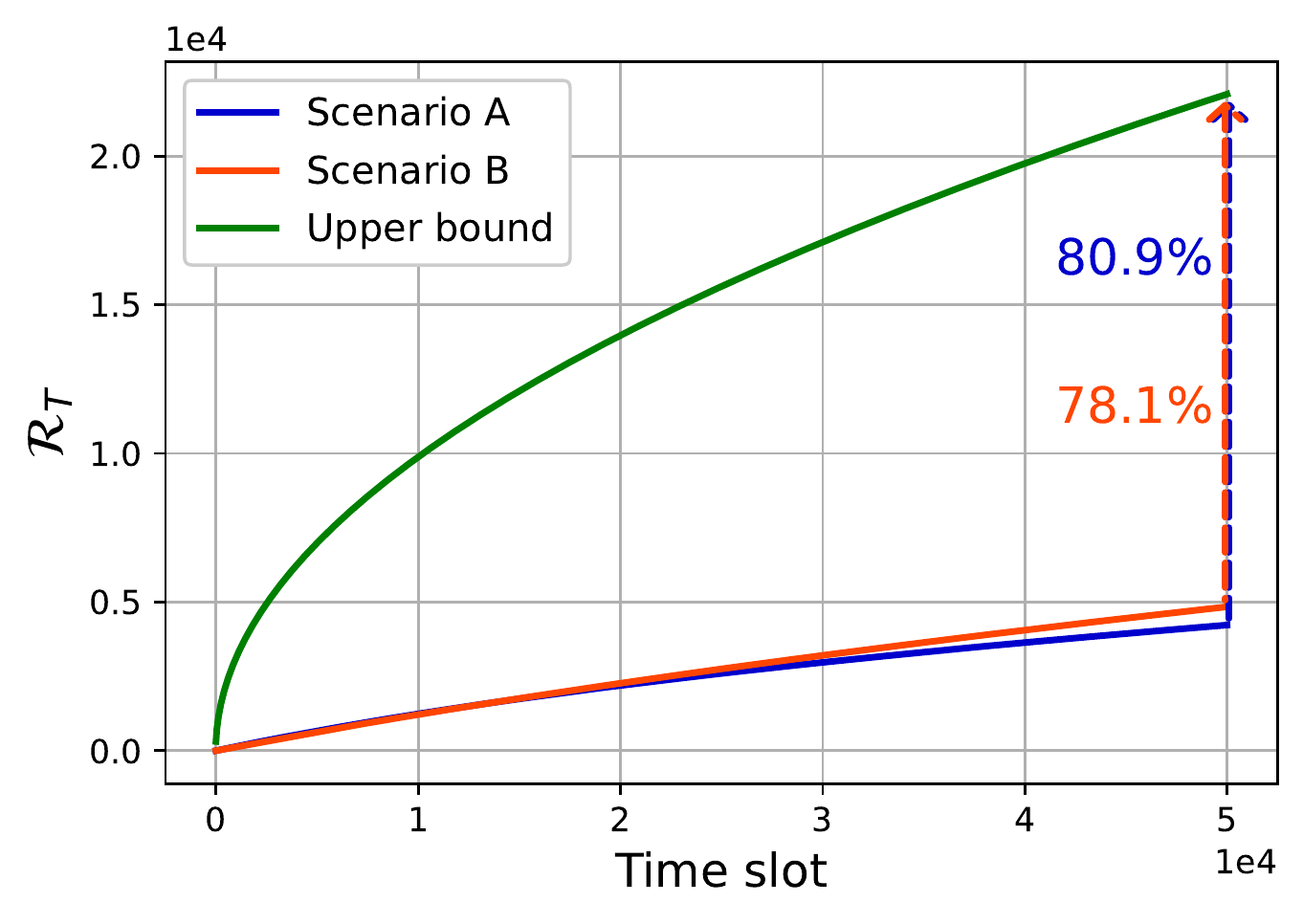}}
    \caption{BSvBS regret in Scenarios A and B for $|\mathcal{X}| = 256$.}
    \label{fig:BSvBS}
\end{figure}

\begin{figure}[t]
    \centering
    \subfigure[CPU power.]{\label{fig:energy_cpu}\includegraphics[scale=0.3]{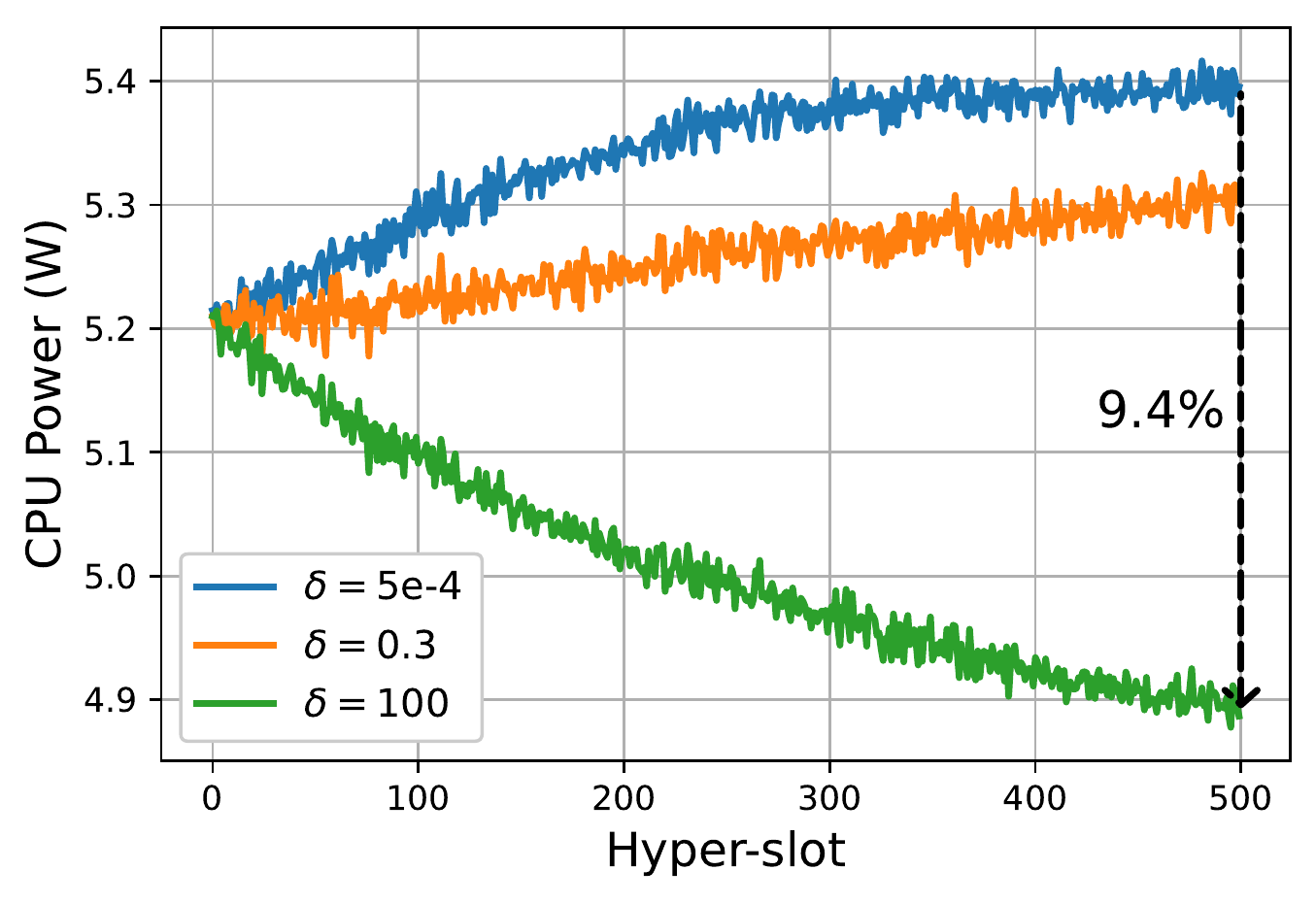}} 
    \subfigure[Total power.]{\label{fig:energy_total}\includegraphics[scale=0.3]{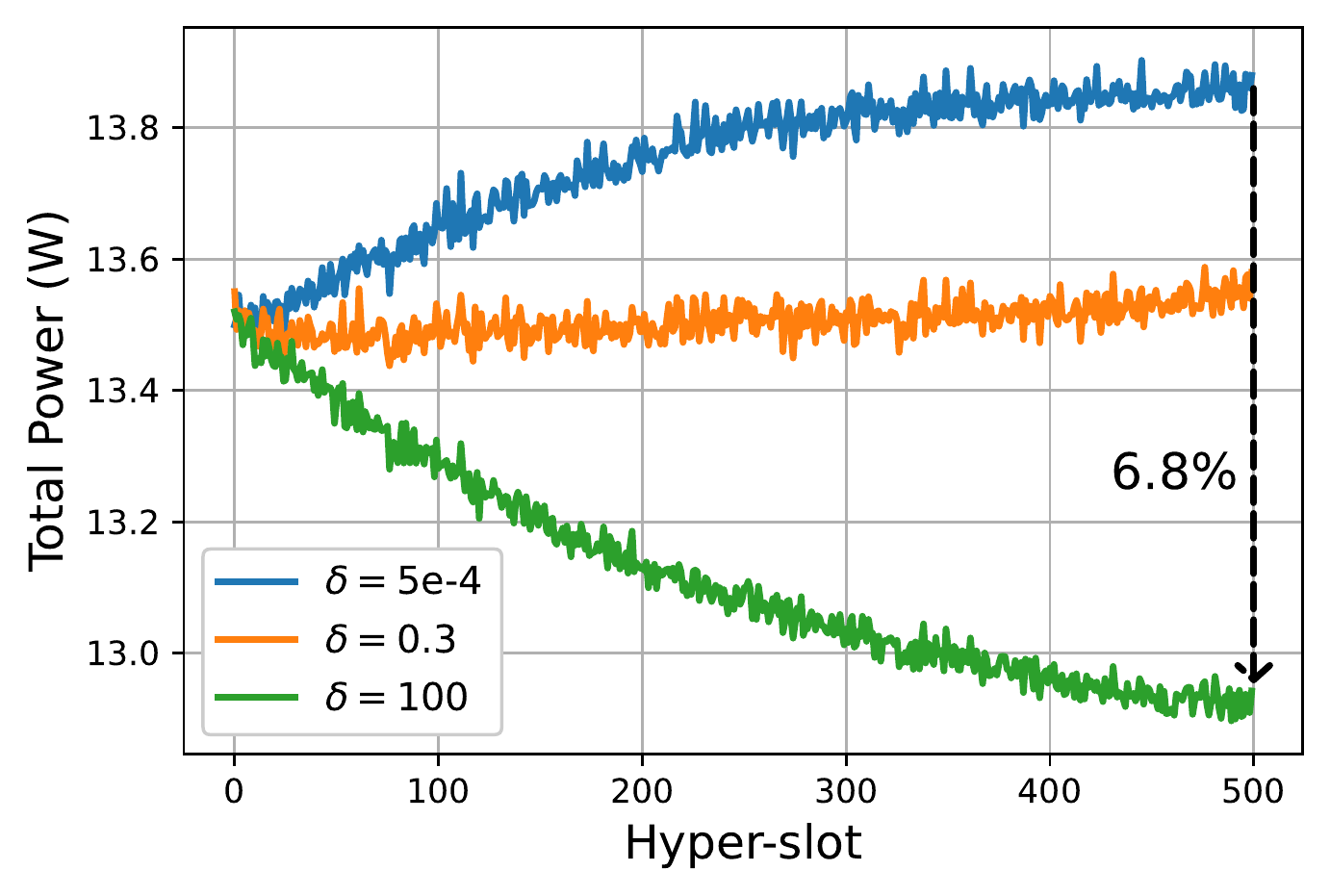}}
    \caption{BSvBS power consumption in Scenario B for $|\mathcal{X}| = 256$.}
    \label{fig:BSvBS_energy}
\end{figure}

In Fig. \ref{fig:BSvBS_energy}, we show the effect of $\delta$ in the consumed power. In detail, we plot the evolution of the power consumption in Scenario B and we distinguish two cases: \emph{(i)} the total power consumption; and \emph{(ii)} the BBU/CPU power consumption.\footnote{The testbed uses a single energy source, so the total power consumption aggregates the power of the entire platform and the radio component.} We run the algorithm for $T\!=\!100k$ slots, and we define a \emph{hyper-slot} of 200-slots-length to facilitate the presentation of results. Indeed, as $\delta$ increases, priority is given to the minimization of the power instead of the maximization of the utility (see eq. \ref{eq:reward}). Thus, for the hyper-slot \SI{500}{}, we manage to save \SI{6.8}{\percent} and \SI{9.4}{\percent} in the total and CPU power consumption, respectively, by using $\delta\!=\!100$ instead of $\delta\!=\!5e\!-\!4$. The operators of the system can use our proposal for conserving energy.

\begin{table}[t]
	\centering
	\caption{Power cost (Scenario B)} 
	\label{tab:power}
	\resizebox{\columnwidth}{!}{
		\begin{tabular}{cccccccccc}
			\toprule
			\multirow{2}{*}{Slots} & \multicolumn{3}{c}{CPU (kW)} & \multirow{2}{*}{\textbf{Saving} (\%)} & \multicolumn{3}{c}{Total (kW)} & \multirow{2}{*}{\textbf{Saving} (\%)} \\
			& {BP-vRAN} & {BSvBS} & {Min} & & {BP-vRAN} & {BSvBS} & {Min} \\
			\midrule
			$200k$ & 1052.6 & 992.1 & 955.1 & \textbf{62.1} & 2735.6 & 2609.8 & 2566.2 & \textbf{74.3}\\[0.2em]
			$100k$ & 534.6 & 501.1 & 476.7 & \textbf{57.9} & 1375.2 & 1313.2 & 1284.7 & \textbf{68.6} \\[0.2em]
			$50k$ & 262.4 & 252.9 & 235.6 & \textbf{35.5} & 677.3 & 660.9 & 635.0 & \textbf{38.8}\\
			\bottomrule\bottomrule
		\end{tabular}
	}
	\bigskip
\end{table}

Table \ref{tab:power} presents the power consumed by BP-vRAN and BSvBS in Scenario B and compares the gains each of them achieves w.r.t. the ideal-minimum-energy of the benchmark. We set $\delta=100$ to prioritize the power costs strongly over the accomplished utility. By running BSvBS for $T=50k$ slots, we get \SI{35.5}{\percent} and \SI{38.8}{\percent} savings in the CPU and total power, respectively, compared to BPvRAN. In other words, BSvBS approaches closer to the minimum possible energy cost than its competitor.\footnote{It is worth noting that vBS consumes considerable power even under the best configuration (Min column) or even when no user is active; thus, the direct comparison of the power would not demonstrate the actual gains.} Interestingly, we also see an increase in these attained gains by observing the performance of BSvBS for more time slots. That is, by doubling the number of slots, the savings increase by more than $1.5\times$ up to \SI{57.9}{\percent} and \SI{68.6}{\percent} in the CPU and total consumption;  and become \SI{62.1}{\percent} and \SI{74.3}{\percent}, respectively, for $T=200k$ slots. On the one hand, we note that BSvBS performs better as time passes and reaches closer to the consumption of the best configuration, even in the adversarial scenario. On the other hand, the indication that the power consumption of BP-vRAN almost doubles as the slots are doubled means that the algorithm is stuck exploiting under-performing configurations.

\vspace{0.1cm}
\section{Conclusions and Future Work} \label{sec:conclusions}

The virtualization of base stations and the design of O-RAN systems lies at the forefront of research in mobile networks. A milestone in this roadmap is finding scheduling policies that maximize the performance of vBSs while restraining their energy consumption. These policies should be practical, i.e., have minimal overheads, and applicable in realistic scenarios, meaning they should not require strong assumptions about the (often volatile) network conditions and/or user needs. The proposed learning scheme is O-RAN-compatible, has robust performance guarantees, offers a knob for prioritizing energy cost reduction, and has lightweight implementation while outperforming other computation-demanding policies (based on different learning approaches). Indeed, our extensive data-driven experiments showcase gains w.r.t. these state-of-the-art competitors that range from $35.5\%$ up to $74.3\%$ in terms of energy savings. To encourage future study in this field, we have made the source code used in this work publicly available. The significance of these results can be understood by considering the number of base stations already deployed, which is only expected to increase in the near future. Finally, our analysis identifies exciting directions for future work, such as improving the learning bounds by reducing further the dependency on the policy dimension. 

\vspace{0.1cm}
\section{Acknowledgments} \label{sec:acknowledgments}

This work was supported by the European Commission through Grant No. 101017109 (DAEMON).
\vspace{0.1cm}
\bibliography{references.bib}
\bibliographystyle{IEEEtran}
\end{document}